\documentclass[11pt,letter]{article}
\usepackage{caption}
\usepackage{subcaption}
\usepackage[colorlinks=true,linkcolor=blue,citecolor=blue]{hyperref}
\usepackage{tikz}
\usepackage{tikz-qtree}
\usepackage{amsmath,amssymb}
\usepackage[capitalise]{cleveref}
\usepackage[margin=1in]{geometry}
\usepackage{palatino}
\usepackage{setspace}
\usepackage{amsthm,thmtools, thm-restate}
\usepackage{xcolor} 
\usepackage{comment}
\usepackage{lineno}

\newtheorem{theorem}{Theorem}[section]
\newtheorem{definition}[theorem]{Definition}
\newtheorem{observation}[theorem]{Observation}

\newtheorem{lemma}[theorem]{Lemma}

\newtheorem{corollary}[theorem]{Corollary}

\newcommand{\fand}{\mathsf{AND}}
\newcommand{\for}{\mathsf{OR}}

\newcommand{\cc}{\mathsf{C}}
\newcommand{\sen}{\mathsf{s}}
\newcommand{\bsen}{\mathsf{bs}}
\newcommand{\dqc}{\mathsf{D}}
\newcommand{\dqcs}[1]{\mathsf{D_{#1}}}
\renewcommand{\deg}{\mathsf{deg}}
\newcommand{\TRIBES}{\mathsf{TRIBES}_{n}}
\newcommand{\TRIBESCS}{(\mathsf{TRIBES}_n)_{CS}}

\title{On Condensation of Block Sensitivity, Certificate Complexity and the $\fand$ (and $\for$) Decision Tree Complexity}
\author{
Sai Soumya Nalli\thanks{Microsoft Research Lab, Bangalore, India. \texttt{saisoumya7208@gmail.com}}
\and
Karthikeya Polisetty$^\dagger$
\and
Jayalal Sarma
\thanks{Indian Institute of Technology Madras, Chennai, India. \texttt{cs22b026@smail.iitm.ac.in}, \texttt{jayalal@cse.iitm.ac.in}}
}
\date{}

\begin{document}
\maketitle

\begin{abstract}
Given an $n$-bit Boolean function with a complexity measure (such as block sensitivity, query complexity, etc.) $M(f) = k$, the hardness condensation question asks whether $f$ can be restricted to $O(k)$ variables such that the complexity measure is $\Omega(k)$?

In this work, we study the condensability of block sensitivity, certificate complexity, AND (and OR) query complexity and Fourier sparsity. We show that block sensitivity does not condense under restrictions, unlike sensitivity: there exists a Boolean function $f$ with query complexity $k$ such that any restriction of $f$ to $O(k)$ variables has block sensitivity $O(k^{\frac{2}{3}})$. This answers an open question in \cite{Goos2024} in the negative. The same function yields an analogous incondensable result for certificate complexity. We further show that $\mathsf{AND}$(and $\mathsf{OR}$) decision trees are also incondensable. 
\end{abstract}

\hypersetup{linkcolor=blue}

\tableofcontents

\section{Introduction}

Hardness condensation is a powerful lower-bound technique in Boolean function complexity theory that allows us to reduce the size of a problem while preserving its essential complexity characteristics. In the context of Boolean functions, hardness condensation refers to the process of identifying smaller subfunctions that retain the complexity properties of the original function. Conceptually, it seeks to determine whether the computational hardness is ``concentrated" in specific parts of the input domain.


\vspace{0.3cm}
\noindent

If $f : \{0,1\}^n \to \{0,1\}$ is a Boolean function and $M$ is a complexity measure (like sensitivity, block sensitivity, etc.) with $M(f) = k$, we say that $f$ \emph{condenses under restriction} if there exists a partial assignment $\rho : [n] \rightarrow \{0, 1, *\}$ where $|\rho^{-1}(*)| = k$ variables, such that the resulting $k$-bit function $f' := f|_{\rho} : \{0,1\}^k \to \{0,1\}$ satisfies $M(f') = \Omega(k)$. Moreover, we say that $f$ \emph{condenses losslessly} if $M(f')$ condenses to $\Theta(M(f))$  and \emph{condenses exactly} if it condenses to $M(f)$.

Hardness condensation is used to show lower bounds in circuit complexity \cite{BureshSanthanam06} and proof complexity\cite{ProofComplexityRazborov2016}. Recently, \cite{Goos2024} showed that deterministic query complexity does not admit lossless condensation: they construct a function with query complexity $k$ such that any restriction to $k$ variables has query complexity at most $O(k^{3/4})$


\paragraph{Our Results:}

We give a function that demonstrates the incondensability of block sensitivity, thereby answering the open question posed in \cite{Goos2024}.  

\begin{theorem}
\label{thm:bscccond}
    There exists a function on $k^4$ variables with block sensitivity (resp. certificate complexity) $k^3$ such that any restriction of the function to $O(k^3)$ variables will have block sensitivity (resp. certificate complexity at most $O(k^2)$.    
\end{theorem}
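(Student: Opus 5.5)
The plan is to build the function as a \emph{cheat-sheet} version of a composed $\fand$/$\for$ function. The macros $\TRIBES$ and $\TRIBESCS$ indicate this is the intended route. A plain composition such as $\for_{k^3}\circ\fand_k$ on $k^4$ variables does have block sensitivity $k^3$: at the all-zeros input, each $\fand$-block is a sensitive block. But it condenses trivially. Restrict each $\fand$-block to a single free variable and set the rest to $1$. This leaves $\for_{k^3}$ on $k^3$ variables, with full block sensitivity. The construction must prevent such ``thinning out'' of the inner gadgets. The cheat sheet does this by making the output depend on verifying a \emph{pointed-to} certificate.

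Concretely, take $c\approx \log k$ copies of a base function $g$ (a tribes-like $\for\circ\fand$ function). Their outputs form an address into an array of $2^{c}$ cells. Each cell must contain a valid certificate (the positions and values witnessing each copy's output) for the copies. The function is $1$ iff the addressed cell's certificate is consistent with the actual inputs. The parameters are tuned so the total is $k^4$ variables, each cell has size $\approx k^3$, and $\bsen(f)=\Theta(k^3)$ and $\cc(f)=\Theta(k^3)$.

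For the lower bound on the full function, take an input where the address points at a correct cell. Then argue either that flipping disjoint pieces of that cell gives $k^3$ sensitive blocks, or that the inherited block sensitivity of the copies of $g$ survives. This part should be routine bookkeeping.

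The heart of the proof is the upper bound for restrictions. Let $\rho$ leave $m=O(k^3)$ variables free, and fix an input $x$ of $f|_\rho$. Then:
\begin{itemize}
\item Since $\bsen\le\cc$, it suffices to bound $\cc(f|_\rho)$ by $O(k^2)$, giving both claims at once.
\item A certificate of $f|_\rho$ at $x$ has to determine the address $a(x)$ and then the content of cell $a(x)$ restricted to free variables.
\item With only $O(k^3)$ free variables spread over $2^c$ cells and $c$ copies of $g$, a counting/averaging argument should show the following. Either most copies of $g$ are nearly fixed by $\rho$, so fixing the address costs only $O(k^2)$ free queries. Or the copies keep enough freedom, but then the cells have few free variables each, so verifying the pointed cell costs $O(k^2)$.
\end{itemize}

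The key lemma I would aim for is: \emph{for every restriction with $m$ free variables, the restricted copies of $g$ have certificate complexity $O(\sqrt{m}\cdot\mathrm{poly}\log)$ or the restricted cells have $O(m/k)$ free variables.} The idea is that the $\fand$-gadgets in $g$ can only be kept alive with few free variables by thinning, and thinning is exactly what the cheat sheet penalizes. A thinned gadget's output is certified by a single free variable, but then the cell must record it, so the free variables in the cells scale accordingly.

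The main obstacle is this trade-off lemma. In particular, the adversary may allocate free variables unevenly across cells, and certificates may use fixed variables for free. I would first try a direct case split on how many of the $k^3$ free variables land in the address part versus the array part, threshold $k^2$. I would handle the array case by noting that only the single pointed cell matters for the certificate. If that fails to give $O(k^2)$, I would adjust the cell size and gadget widths, since the exponent $2/3$ in the abstract suggests a balance of $k^2$ versus $k^3$ achieved by such parameter tuning.
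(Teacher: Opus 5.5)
There is a genuine gap, and it lies in the choice of construction, not only in the unproved trade-off lemma: a cheat-sheet function cannot witness this theorem.

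Sensitivity condenses losslessly. If $\sen(f,z)=s$, free exactly the $s$ sensitive coordinates of $z$ and fix the rest to $z$; then $\cc(f|_\rho)\ge\bsen(f|_\rho)\ge s$. So any witness must have $\sen(f)=O(k^2)$ while $\bsen(f)=k^3$.

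Your function has sensitivity close to its certificate complexity. Fix the $c$ copies of $g$ to an input with address $\ell$, chosen so the certificates recorded in cell $\ell$ are as long as possible. Fix every other cell, and leave the $\approx k^3$ bits of cell $\ell$ free. The restricted function asks whether cell $\ell$ records valid certificates for the now-fixed copies, and at a valid cell every recorded value bit is sensitive. A certificate of $f$ is essentially the input certificates plus the pointer-encoded addressed cell, so $\cc(f)$ exceeds the number of recorded value bits by at most a logarithmic factor. Hence $\sen(f|_\rho)=\tilde\Omega(\cc(f))=\tilde\Omega(k^3)$ on only $O(k^3)$ free variables.

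The restriction never needs any freedom in the copies of $g$, so your key lemma is false for this function. Retuning cell sizes or gadget widths cannot help, because the obstruction is structural. Cheat sheets separate $\dqc$ from $\cc$, which is why G\"o\"os et al.\ and this paper's $\fand$-decision-tree section use $\TRIBESCS$. They do not separate $\bsen$ from $\sen$, which is what is needed here.

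The missing idea is to use a function with a polynomial gap between $\sen$ and $\bsen$, as the paper does with a modified Rubinstein function. Let $g$ act on $k^2$ variables split into $k$ consecutive blocks of width $k$, with $g(x)=1$ iff some block is all ones and every other bit is $0$. Let $f=\bigvee_{i\in[k^2]}g(x_i)$ on $k^4$ variables. At $\vec 0$, the $k^3$ width-$k$ blocks give $\bsen(f)=\cc(f)=k^3$. On the other hand, each copy of $g$ is sensitive to at most one bit at a $0$-input, so $\sen(f)\le k^2$.

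For a restriction with $r_i$ free variables in copy $i$, a non-constant $g_{\rho,i}$ has $\cc_0\le\max(2,r_i/k)$, by three cases:
\begin{itemize}
\item if the input sets some free bit to $1$, two bits witness a broken block or two active blocks;
\item the all-zero completion needs one $0$ per fully free block, of which there are at most $r_i/k$;
\item if $\rho$ itself planted $1$'s in a block, a single bit suffices.
\end{itemize}
The last case is exactly how the construction defeats the ``thinning'' you observed for $\for_{k^3}\circ\fand_k$. A $1$-certificate only needs to reveal one copy, at cost at most $k^2$. Summing, $\cc(f|_\rho)\le 2k^2+\sum_i r_i/k=O(k^2)$ when $\sum_i r_i=O(k^3)$. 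Finally, $\bsen\le\cc$ gives the block-sensitivity statement.
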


Here the function is a modified form of the Rubinstein's function \cite{Rubinstein1995} that initally proved the quadratic gap between sensitivity and block sensitivity. It constructed as an $\for$ of $k^2$ copies of a base function on $k^2$ variables. We also demonstrate that the extent of incondensability for block sensitivity, as presented above, is optimal for Rubinstein-like functions across all possible parameters. The analysis can be found in the Appendix~\ref{subsec:Optimal_gaps}. 
~\cref{thm:bscccond} has also been independently shown by Kayal {\em et al}~\cite{KMPS05}.

We show incondensability for $\fand$ (and $\for$) decision trees using the function used in \cite{Goos2024} for the incondensability of query complexity. We exploit the connection between these measures and the $0$-decision trees (\cite{and_zero_decision_trees} (see preliminaries for a definition) to prove this. More precisely, we prove the following:
\begin{theorem}
    There exists a boolean function $f$ with $\fand$-query complexity $\dqcs{\wedge} = k$ (resp $\dqcs{\vee}$) such that any restriction of the function to $k$ variables will have $\dqcs{\wedge} \le \tilde{O}(k^{3/4})$.
\end{theorem}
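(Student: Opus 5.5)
We plan to reuse the function $f$ from \cite{Goos2024}, which has deterministic query complexity $\dqc(f)=k$ while every restriction $f|_\rho$ to $k$ free variables satisfies $\dqc(f|_\rho)\le O(k^{3/4})$. The proof will sandwich $\dqcs{\wedge}$ between $\dqc$ and a polylogarithmic-loss version of it, so that both properties carry over. The upper direction is immediate: every ordinary decision tree is an $\fand$ decision tree with singleton queries, so $\dqcs{\wedge}(g)\le \dqc(g)$ for every $g$. Hence every restriction to $k$ variables has $\dqcs{\wedge}(f|_\rho)\le O(k^{3/4})$. The $\for$ case follows by the same argument applied to the dual $\neg f(\neg x)$, because $\dqcs{\vee}(g)=\dqcs{\wedge}(\neg g(\neg x))$ and restrictions commute with negating inputs and outputs.

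The remaining task is the lower bound $\dqcs{\wedge}(f)=\tilde\Omega(k)$, and this is the main obstacle. Plain $\dqc$ can exceed $\dqcs{\wedge}$ by a large factor (for example, $\fand_n$ has $\dqc=n$ but $\dqcs{\wedge}=1$), so we cannot transfer the bound generically. We will instead use the $0$-decision tree connection of \cite{and_zero_decision_trees}: an $\fand$ tree of depth $d$ can be simulated by an ordinary tree in which, along every path, the number of $0$-answers to singleton queries is at most about $d\log n$. Equivalently, $\dqcs{\wedge}(g)=\tilde\Theta$ of the minimal $0$-depth. An $\fand$ query answering $0$ is resolved by a binary search for a zero, costing $O(\log n)$ zero-answers, while an answer of $1$ fixes a block of ones for free.

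We will therefore prove that $f$ needs $0$-depth $\tilde\Omega(k)$. If the \cite{Goos2024} construction is a composition involving an outer function with large query complexity and inner gadgets, we will give an adversary argument that answers $1$ whenever it is consistent to do so. We then aim to show that the algorithm must still receive about $k$ zero-answers before the value is determined, which amounts to bounding $\min$-certificate-type quantities on zero-heavy inputs.

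If $f$ itself turns out to have small $0$-depth, the fallback is to modify it: compose $f$ with a small gadget such as $\for_2$ or an indexing gadget that lifts $0$-depth to query complexity. Such a gadget makes $\dqcs{\wedge}(f\circ g)=\tilde\Omega(\dqc(f))$. We then check that restrictions of the composed function to $k$ variables still reduce to restrictions of $f$ on $O(k)$ variables, so the $O(k^{3/4})$ upper bound survives up to constants or log factors. The $\tilde{O}$ in the statement is meant to absorb exactly these logarithmic losses from the simulation.
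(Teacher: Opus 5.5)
Your upper-bound half and the $\for$ duality are correct and match the paper; the lower bound $\dqcs{\wedge}(f)=\tilde\Omega(k)$, the only nontrivial content, is not established. The upper bound works because $\dqcs{\wedge}\le\dqc$, so every restriction inherits the $O(k^{3/4})$ bound of \cite{Goos2024}. For the lower bound, your main line depends on a structure of $f$ you have not pinned down. The adversary you sketch also points the wrong way: in a $0$-depth argument a $1$-answer costs the tree nothing, so the adversary must answer $0$ for as long as it can. Take $\TRIBES$, on which the \cite{Goos2024} function $\TRIBESCS$ is built. If ``answer $1$ whenever consistent'' means whenever the value stays undetermined, the tree can read one variable in each of the first $n-1$ rows and then scan the last row, finishing with at most $n$ zero-answers. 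The paper's adversary instead answers $0$ except on the last unread variable of a row, forcing $n^2-n+1$ zeros. It then lifts this to $\TRIBESCS$ by answering $0$ on every cheat-sheet bit and running the $\TRIBES$ adversary in each copy. With fewer than $n^2$ queries no copy is fully read, so every $\TRIBES(x_i)$ is still free, and some cheat-sheet cell is untouched. The pointer can therefore be aimed at that cell and the output set either way, while at most $cn=O(n\log n)$ answers were $1$. (The inequality you need, $\dqcs{0}\le\dqcs{\wedge}$, has no $\log n$ loss: evaluate each $\fand$ query bit by bit and stop at the first $0$. Binary search is only needed for the converse.)

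Your fallback would close the gap, but as written it is only asserted. Take $F=f\circ\for_2$. Answer $0$ to the first probe of each gadget, and relay the second probe to an optimal $\dqc$-adversary for $f$, one that never lets $\dqc$ of the restricted function drop by more than $1$ per query. The inputs of $f$ consistent with the transcript form the subcube fixed exactly on the fully read gadgets. So $F$ stays undetermined until $\dqc(f)$ gadgets are fully read, and each of these contributed a $0$-answer. Hence $\dqcs{\wedge}(F)\ge\dqcs{0}(F)\ge\dqc(f)$, while $\dqcs{\wedge}(F)\le\dqc(F)\le 2\dqc(f)$. Under any restriction a gadget becomes a constant, a single free variable, or $\for_2$ of two free variables. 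So $F|_\rho$ is $f|_{\rho'}$ composed with such gadgets, where $\rho'$ leaves at most $|\rho^{-1}(*)|\le 2k$ variables free, a case \cite{Goos2024} covers. This gives $\dqcs{\wedge}(F|_\rho)\le\dqc(F|_\rho)\le 2\dqc(f|_{\rho'})=O(k^{3/4})$. Written out, this is a black-box alternative to the paper's argument. It converts any $\dqc$-incondensable function into an $\fand$-incondensable one with only a constant-factor loss and no analysis of $\TRIBESCS$. The paper instead shows directly that the \cite{Goos2024} function already has $0$-depth $\Omega(n^2)$.
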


The incondesibility result for $\for$ query complexity follows for essentially  the same proof structure.

Further we also observe that Forurier Sparsity follows a weak form of condensation: for any Boolean function with Fourier sparsity $k$, there exists a restriction to $O(k^2)$ variables such that the Fourier sparsity of the restricted function is $k$. This follows directly from the result on monomial sparsity in \cite{hart2025condensing}.

\section{Preliminaries}
In this section, we will define various complexity measures of Boolean functions that we will be working with. Along with the definitions, we will also discuss some basic properties and known results about these measures that will be useful in the later sections.

For a Boolean function $f : \{0, 1\}^n \mapsto \{0, 1\}$ and $x \in \{0, 1\}^n$:
    The \emph{sensitivity} $\sen(f, x)$ is the number of bits $i$ such that $f(x) \neq f(x \oplus e_i)$.
    The \emph{block sensitivity} $\bsen(f, x)$ is the maximum number of disjoint blocks $B_j \subseteq [n]$ such that $f(x) \neq f(x \oplus e_{B_j})$.
    The \emph{certificate complexity} $\cc(f, x)$ is the size of the smallest $S \subseteq [n]$ such that $f$ is constant on the sub-cube $\{y : y|_{S} = x|_S\}$.

\begin{definition}{\cite{and_zero_decision_trees}}
    Given a deterministic decision-tree $T$ over $\{0, 1\}^p$ , the 0-depth (or 1-depth resp.) of $T$ is the maximum number of queries which are answered 0 (or 1 resp.), in any root-to-leaf path of $T$. The 0-query complexity of a Boolean function $f$, denoted by $\dqcs{0}(f)$, is the smallest 0-depth of $T$, taken over deterministic decision-trees $T$ which compute $f$.
\end{definition}

\begin{definition}[\textbf{AND-Decision Tree}]
    An \textbf{$\fand$-Decision Tree} is a deterministic query model that generalizes the classical decision tree. While a classical tree queries individual input bits $x_i\in \{0, 1\}$, each internal node in an $\mathsf{AND}$-decision tree queries a conjunction of a subset of input bits. Formally, a node $v$ computes a function $f_v(x) = \bigwedge\limits_{i\in S_v} x_i$ for some $S_v\subseteq [n]$, the decision to which child to choose is based on the binary outcome of this query.
\end{definition}

\begin{definition}[\textbf{OR-Decision Tree}]
    An \textbf{$\for$-Decision Tree} is a deterministic query model that generalizes the classical decision tree. While a classical tree queries individual input bits $x_i\in \{0, 1\}$, each internal node in an $\mathsf{OR}$-decision tree queries a disjunction of a subset of input bits. Formally, a node $v$ computes a function $f_v(x) = \bigvee\limits_{i\in S_v} x_i$ for some $S_v\subseteq [n]$, the decision to which child to choose is based on the binary outcome of this query.
\end{definition}

\section{Incondensability of Certificate Complexity \& Block Sensitivity}

Our construction builds on Rubinstein's function~\cite{Rubinstein1995} defined as follows. The function is on $n = k^2$ variables, which are divided into $k$ blocks with $k$ variables each. The value of the function is $1$ if there is at least one block with exactly $k$ consecutive $1$s in it and it is $0$ otherwise.
We define the Modified Rubinstein function $f$ using the function $g$ on $k^2$ variables as a component:

\begin{definition}[\textbf{Modified Rubinstein's Function}]
\label{def:Mod_Rub}
We define the function $f$ using the function $g$ on $k^2$ variables as a component:

\begin{equation}
    g(x) = \begin{cases}
1 & \textrm{ if } \exists j : x_{jk+1}=\cdots=x_{jk+k}=1
\quad \& \quad
x_i=0 \text{ for all } i \notin \{jk+1,\ldots,jk+k\}\\
0 & \textrm{ otherwise }
\end{cases}
\end{equation}

Consider the extension on $k^2$ copies of the following kind: let $X_1, X_2, ..., X_{k^2}$ be disjoint set of variables of size $k^2$ and $x_1,.., x_{k^2}$ are the input blocks on each of this.

We now define $f$ as the OR of $g$ value of all of the blocks.
$$f(x) = \bigvee_{i \in [k^2]}g(x_i)$$

\end{definition}

We analyze the defined modified Rubinstein's function to understand it's certificate complexity. We firstly show that any certificate for the function has to adhere to the following properties.

\begin{observation}
    Consider $x$ such that $g(x) = 0$ and let $B_j = \{jk, \cdots, jk + k -1\}$ be the $j^{th}$ contiguous set of indices, then there exist one of the following:

    \begin{enumerate}
        \item For each $j \in \{0, 1, \cdots k-1\}$, there is an index $i_j \in B_j$ such that $x_{i_j} = 0$ 
        \item There exist 2 $j, j' (j \neq j')$ such there there are indices $i \in B_j$ and $i' \in B_{j'}$ such that $x_i = x_{i'} = 1$
    \end{enumerate}
    \label{obs:0cert_mod_Rub}
\end{observation}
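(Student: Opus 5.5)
The plan is to argue by contraposition: assume $g(x)=0$ and that condition~1 fails, and show that condition~2 must hold. Throughout, the blocks $B_j$ are read as the $k$ contiguous groups of $k$ indices used in the definition of $g$, i.e.\ $B_j=\{jk+1,\ldots,jk+k\}$ up to the indexing shift in the statement. These $k$ blocks partition $[k^2]$, and the defining condition of $g$ says that $g(x)=1$ exactly when some block is all ones and every index outside that block is zero.

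\emph{First step.} If condition~1 fails, there is some $j^*$ such that no index of $B_{j^*}$ carries a $0$. Hence $x_i=1$ for all $i\in B_{j^*}$, so the block $B_{j^*}$ is entirely ones.

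\emph{Second step.} We now use $g(x)=0$. Since $B_{j^*}$ is all ones, the only way the first clause of $g$ can fail for $j=j^*$ is that some index $i'\notin B_{j^*}$ has $x_{i'}=1$. Because the blocks partition $[k^2]$, this $i'$ lies in some block $B_{j'}$ with $j'\neq j^*$.

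\emph{Conclusion.} Take $j=j^*$ and pick any $i\in B_{j^*}$; then $x_i=1$. Together with $i'\in B_{j'}$ and $x_{i'}=1$, this is exactly condition~2.

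There is no real obstacle here; the only point needing care is that the blocks in the observation coincide with the blocks of $g$ after the off-by-one re-indexing. It is also worth noting that the two conditions need not be exclusive, which is why the claim is that at least one of them holds. This is the fact later used to build $0$-certificates of size $O(k)$ per copy: either one zero per block ($k$ bits) or two ones in distinct blocks ($2$ bits).
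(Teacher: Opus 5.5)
Your proof is correct and is essentially the paper's argument: the paper assumes neither condition holds, deduces that some block $B_j$ is all ones while no other block contains a $1$, and concludes $g(x)=1$, which is the same contrapositive you run (and your reading of the off-by-one in the indexing of $B_j$ is the right one). The paper also states the easy converse, that either condition forces $g(x)=0$; this is the direction actually needed when the conditions are later used as $0$-certificates, so your closing remark relies on it rather than on the observation you proved.
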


\begin{proof}
    Note that is either one of the above happens we can conclude that $g(x) = 0$.

    Now we need to show that if they both do not happen then $g(x) = 1$. Suppose that that the 2 statements do not hold together, then we know that $\exists j $ such that $\forall i \in B_j, x_{i_j} = 1$. However we also know that there is no other $j'$ such that any index from $B_{j'}$ contains a 1. This implies that $g(x) = 1$
\end{proof}

Using the above observation we now analyze the certificate complexity of $g$:

\begin{description}
    \item[$g(x) = 1$] Note that when this happens all the variables have to revealed to conclude that $g(x)$ is indeed 1. If not any of the variables (say $i$) outside the certificate can contain a different value from $x_i$ and the output is actually a 0. Hence $\cc_1(g) = k^2$

    \item[$g(x) = 0$] The observation \ref{obs:0cert_mod_Rub} allows us to conclude that $\cc(f, x) \le \max(k, 2)$ where the equality happens for the input $\vec{0}$ and hence $\cc_0(g) = \max(k, 2)$
\end{description}

Now we consider the extension on $k^4$ blocks of the following kind: let $X_1, X_2, ..., X_{k^2}$ be contiguous variables of size $k^2$ and $x_1,.., x_{k^2}$ are the input blocks on each of this.

We define $f(x) = \bigvee_{i \in [k^2]}g(x_i)$, in other words the OR of g value of all of the blocks. We examine the certificate complexity

\begin{description}
    \item[$f(x) = 1$] Note that in this case the $x_i$ which gives $g(x_i)$ has to be revealed and we are done. This implies $\cc_1(f) \le k^2$. For any such input $x$, if less than $k^2$ bits are revealed the the output can always be made 0 since no block conclusively has output 1. ($\cc_1(g) = k^2$)

    \item[$f(x) = 0$] In this case we need a $0$ certificate and from the previous observations we know that $\cc_0(f) \le k^2(\max(k, 2))$. Consider $\vec{0}$ if less than $k^3$ values are revealed then there's at least one block of size $k$ in one of the $X_i$ where no values are revealed and can be made to be all 1's to get $\cc(f, \vec{0}) = k^3$
    
\end{description}

\begin{observation}
    Let $\rho$ be a restriction on $g$ ($k^2$ variables) such that $g_{\rho}$ is not a constant function, then the certificate complexity $\cc_0(g_{\rho}) \le \max (2, \frac{r}{k})$ where $r$ is the number of free variables
    \label{obs:cc_partial}
\end{observation}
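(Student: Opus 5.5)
Fix a restriction $\rho$ of $g$ with $r$ free variables for which $g_\rho$ is non-constant. The plan is to reuse the two-case structure of \cref{obs:0cert_mod_Rub}: an input of $g_\rho$ with value $0$ is certified either by one $0$ in every block $B_j$ that could still become the all-ones block, or by two $1$s lying in different blocks. We translate both kinds of certificate to the restricted function and count only free variables, since fixed variables cost nothing in a certificate of $g_\rho$.

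First we classify the blocks under $\rho$. Call $B_j$ \emph{alive} if $\rho$ fixes no variable of $B_j$ to $0$, so that $B_j$ can still be filled with $1$s. If some variable in some block is fixed to $1$, only that block can be the witness for $g=1$. Therefore every other block is dead for the $1$-case. If $\rho$ fixes $1$s in two different blocks, $g_\rho\equiv 0$, contradicting non-constancy. So we split into two cases: (a) $\rho$ fixes a $1$ in exactly one block $B_{j_0}$; (b) $\rho$ fixes no $1$ at all.

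Now take any $y$ with $g_\rho(y)=0$, viewed as a full input $x$ consistent with $\rho$. If $x$ has $1$s in two different blocks, those two positions give a certificate of size at most $2$. In case (a) one of them can be the fixed $1$, so the cost is even smaller. Otherwise all $1$s of $x$ lie in a single block $B_{j^*}$, and since $g(x)=0$, that block contains a $0$. In this situation a certificate must rule out every alive block becoming the all-ones block with everything else $0$.

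In case (a), only $B_{j_0}$ is relevant. Revealing a single $0$ of $B_{j_0}$ (free, or already fixed and hence costing nothing) suffices, because a $1$ anywhere else together with the fixed $1$ also gives $0$. This yields size at most $2$.

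In case (b), we reveal one free $0$ of $x$ inside each alive block that contains no fixed $0$. Dead blocks cost nothing, and the rest of the input is irrelevant once every candidate block has a $0$. This step needs care for the block $B_{j^*}$ holding $x$'s $1$s: it contains a $0$ that is either free (so we reveal it) or fixed (so the block is dead). The cost is the number $a$ of alive blocks. Each alive block has all $k$ of its variables free, so $ak\le r$, i.e. $a\le r/k$. Combining the cases gives $\cc_0(g_\rho)\le\max(2,r/k)$.

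The main obstacle is the bookkeeping in case (b): making sure the revealed set really forces $g=0$ on the whole subcube. The key point is that every block not containing a revealed or fixed $0$ must be excluded. This holds because every block is either dead, or alive and given a revealed $0$.
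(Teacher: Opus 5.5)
Your proof is correct and follows essentially the same route as the paper. Both split on whether $\rho$ fixes a $1$ (which, by non-constancy, happens in at most one block), and both use the same certificates: two $1$s in different blocks (size $2$), a single bit when one block carries a fixed $1$, and otherwise one free $0$ in each fully-free block, of which there are at most $r/k$. The only difference is an input in the no-fixed-$1$ case whose $1$s lie in one block that also has a $0$: the paper reveals that $1$ and that $0$ (size $2$), while you reuse the one-$0$-per-alive-block certificate of size at most $r/k$, which equally gives the $\max(2, r/k)$ bound.
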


\begin{proof}
    Since $g_{\rho}$ is not trivially constant we can conclude that $\rho$ is of the following kind:

    \begin{enumerate}
        \item There are no 1's assigned by $\rho$. Note that in this case for any $g_{\rho}(x) = 0$, we have the following cases:
        \begin{itemize}
            \item Suppose some of the inputs bits are set to a 1. In this case we have either (a) One block $X_i$ with both a 0 and a 1 or (b) 2 different blocks with a 1. In this case we know that $\cc(g_{\rho}, x) \le 2$

            \item None of the input bits are set to 0. In this case suppose consider all the $B_i$'s (from continuous blocks of size $k$) that are not assigned any 0 by $\rho$. Suppose one bit from each of them is taken that acts as a 0-certificate and hence $\cc(g_{\rho}, \vec{0}) \le \frac{r}{k}$ since there are at most $\frac{r}{k}$ unassigned complete blocks.
        \end{itemize}
    
        \item Suppose there is one particular block $x_i$ which has some 1's and the remaining bits in it unassigned, then any $x$ such that $g_{\rho}(x) = 0$ has the property that either (a) There is at least one 0 in the block $x_i$ (or) (b) There is at least one 1 in another block. In both cases that bit itself becomes the certificate for us to conclude that $g_{\rho}(x) = 0$. Hence $\cc_0(f_{\rho}, x) = 1$
        
    \end{enumerate}
\end{proof}

\begin{theorem}
    Consider any restriction $\rho$ on $f$ ($f$ has certificate complexity $k^3$) such that there are $O(k^3)$ free variables, then certificate complexity can be at most $O(k^2)$.
    \label{thm:cc_incon}
\end{theorem}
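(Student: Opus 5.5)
The plan is to bound $\cc_1(f_\rho)$ and $\cc_0(f_\rho)$ separately for an arbitrary restriction $\rho$ with $r = O(k^3)$ free variables. If $f_\rho$ is constant the bound is trivial, so we assume it is not. Then no block $X_i$ has $g_\rho$ fixed to $1$. Write $r_i$ for the number of free variables of $\rho$ in block $X_i$, so that $\sum_i r_i = r = O(k^3)$. Let $I$ be the set of blocks on which $g_{\rho}$ is non-constant. Blocks on which $g_\rho$ is constant $0$ can be discarded, so $f_\rho = \bigvee_{i\in I} g_{\rho}(x_i)$ as a function of the free variables.

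\textbf{1-certificates.} If $f_\rho(x)=1$, some $i\in I$ has $g_\rho(x_i)=1$. Fixing the $r_i \le k^2$ free variables of that block forces $g_\rho(x_i)=1$, and hence forces $f_\rho = 1$. So $\cc_1(f_\rho)\le k^2$. This uses only the fact, noted above, that $\cc_1(g)=k^2$.

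\textbf{0-certificates.} If $f_\rho(x)=0$, every $i\in I$ has $g_\rho(x_i)=0$. A $0$-certificate for $f_\rho$ is obtained by concatenating $0$-certificates for each $g_\rho(x_i)$ with $i \in I$. By Observation~\ref{obs:cc_partial}, each of these costs at most $\max(2, r_i/k)$, giving
\[
\cc_0(f_\rho) \le \sum_{i\in I} \max\!\left(2,\tfrac{r_i}{k}\right) \le 2|I| + \tfrac{1}{k}\sum_i r_i \le 2k^2 + O(k^2) = O(k^2),
\]
since $|I|\le k^2$ and $\sum_i r_i = O(k^3)$.

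Combining the two cases gives $\cc(f_\rho)=\max(\cc_0,\cc_1)=O(k^2)$, as required.

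\textbf{Main obstacle.} The delicate point is applying Observation~\ref{obs:cc_partial} correctly, namely the bound $r_i/k$ in the case where $\rho$ assigns no $1$s in a block. What is needed is that on the relevant $0$-input, a certificate only has to hit each length-$k$ sub-block $B_j$ that is entirely free with a single $0$. There are at most $r_i/k$ such sub-blocks. Sub-blocks already containing a $\rho$-assigned $0$ need no certificate bits, and witnesses of the other kinds (a $0$ and a $1$ together, or $1$s in two sub-blocks) cost at most $2$.

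I would re-verify that this covers every $0$-input of $g_\rho$. If an input has a $1$, either it has $1$s in two different sub-blocks, or its unique $1$-containing sub-block contains a $0$, or that sub-block is all ones and, since $g_\rho(x_i)=0$, some other sub-block has a $1$. Each case has a certificate of size $\le 2$. If an input has no $1$ at all, the $r_i/k$ bound applies. I would also double-check the case where $\rho$ places $1$s in several sub-blocks; this makes $g_\rho$ constant $0$, so such blocks lie outside $I$.
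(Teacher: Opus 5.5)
Your proof is correct and follows the paper's argument. A $1$-certificate is the free part of one block with $g_\rho(x_i)=1$ (at most $k^2$ bits), and a $0$-certificate is the union of per-block $0$-certificates bounded via Observation~\ref{obs:cc_partial}, summing to $2k^2 + O(k^3)/k = O(k^2)$; restricting the sum to the non-constant blocks $I$ just tidies the paper's version, which applies the observation to every block.
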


\begin{proof}
    \begin{itemize}
        \item Suppose that $f_{\rho}(x) = 1$, then we know that there exists at least one $g$ such that $g_{\rho}(x) = 1$ and hence $\cc_1(f_{\rho}) \le k^2$

        \item If $f_{\rho}(x) = 0$, then we know that revealing a 0 certificate in each of the $g$'s would suffice. Consider $g_{\rho, i}(x_i)$ we know by \ref{obs:cc_partial} that $\cc_0(g_{\rho, i}) \le \max (2, \frac{r_i}{k}) < \frac{r_i}{k} + 2$ where $r_i$ is the number of free variables in the block $X_i$. 

        \begin{align}
            \cc_0(f_{\rho})  & \le \sum_i \cc_0(g_{\rho, i})\\ & < \sum_i \frac{r_i}{k} + \sum_i 2\\
            &\le \frac{O(k^3)}{k} + 2k^2 = O(k^2)
        \end{align}
        
    \end{itemize}
\end{proof}

\begin{corollary}
    Consider any restriction $\rho$ on $f$ ($f$ has block sensitivity $k^3$) such that there are $O(k^3)$ free variables, then block sensitivity can be at most $O(k^2)$.
    \label{cor:bs_incon}
\end{corollary}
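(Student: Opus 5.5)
The plan is to derive the corollary from \cref{thm:cc_incon} via the standard inequality $\bsen(h) \le \cc(h)$, which holds for every Boolean function $h$. Two things are needed: a short justification of this inequality, and a check that $f$ itself has block sensitivity $\Theta(k^3)$, so that the claimed gap is genuine.

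\textbf{Step 1: $\bsen(h,x) \le \cc(h,x)$ for every $x$.} Let $S$ be a minimum certificate for $h$ at $x$, and let $B_1,\dots,B_b$ be disjoint sensitive blocks at $x$. Each $B_j$ must meet $S$. Otherwise $x \oplus e_{B_j}$ agrees with $x$ on $S$, so $h(x\oplus e_{B_j}) = h(x)$, contradicting sensitivity of $B_j$. Since the blocks are disjoint, they meet $S$ in distinct elements, hence $b \le |S|$. Taking the maximum over $x$ gives $\bsen(h)\le \cc(h)$.

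\textbf{Step 2: apply the inequality to restrictions.} Let $\rho$ be any restriction of $f$ leaving $O(k^3)$ free variables. By \cref{thm:cc_incon}, $\cc(f_\rho)=O(k^2)$. Step 1 applied to $h=f_\rho$ then gives $\bsen(f_\rho)=O(k^2)$. If $f_\rho$ is constant, the bound holds trivially.

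\textbf{Step 3: $\bsen(f) = \Theta(k^3)$.} The upper bound $\bsen(f)\le\cc(f)=k^3$ comes from Step 1 and the earlier computation of $\cc(f)$. For the lower bound, take $x=\vec 0$. In each of the $k^2$ copies $X_i$ and each of its $k$ contiguous segments $B_j$, flipping that segment makes $g(x_i)=1$, since the segment becomes all ones and every other bit of $X_i$ stays $0$. This gives $k\cdot k^2=k^3$ disjoint sensitive blocks, so $\bsen(f,\vec 0)\ge k^3$.

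None of these steps is a real obstacle; the argument mostly repackages \cref{thm:cc_incon}. The one point needing care is Step 3, where each flipped segment must yield a $1$-input of $g$, which relies on the remaining bits of that copy being zero at $\vec 0$.
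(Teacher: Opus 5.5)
Your proposal is correct and follows the paper's own argument. The paper likewise notes that $\bsen(f)=\cc(f)=k^3$ and then transfers the $O(k^2)$ certificate bound of Theorem~\ref{thm:cc_incon} to block sensitivity through the general inequality $\bsen(f')\le\cc(f')$; you also prove that inequality and the lower bound at $\vec 0$, which the paper only asserts (it gives a separate direct proof in Appendix~\ref{app:bs_incon}, which you do not need).
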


To prove the above, we firstly observe that the $\bsen(f) = \cc(f) = k^3$. Since we know from Theorem \ref{thm:cc_incon} that any restriction on $O(k^3)$ variables has certificate complexity at most $O(k^2)$, the same holds for block sensitivity since we know that for any function $f', \bsen(f') \le \cc(f')$. A direct proof for the incondensibility of block sensitivity along with the analysis of the modified Rubinestein function is given in Appendix \ref{app:bs_incon}

\section{Incondensability of the \texorpdfstring{$\fand$}{fand} (and \texorpdfstring{$\for$}{for}) query complexity}

\cite{and_zero_decision_trees} shows that $\dqcs{\wedge}(f)$ is closely related to another measure 0-decision tree depth $\dqcs0(f)$ which is defined as follows. For a deterministic decision tree $T$ computing the function $f$, the $0$ depth is the maximum number of queries which are answered 0 in any root-to-leaf path in the tree. $\dqcs0(f)$ is the smallest 0 depth over all the decision trees that compute $f$. The following relation was shown:

\[\dqcs0(f) \le \dqcs{\wedge}(f) \le \dqcs0(f) \log(n+1) \]

We use lower bound the $\dqcs{\wedge}$ of $\TRIBES$ to understand the same adversarial argument that gives the decision tree lower bound gives a $n^2 - n + 1$ lower bound.

\[\TRIBES(x) := \bigwedge^n_{i = i}\bigvee^n_{j = 1} x[i, j]\]

\begin{lemma}
    $\dqcs{0}(\TRIBES) = n^2-n+1 $
\end{lemma}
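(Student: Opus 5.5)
We need to prove $\dqcs{0}(\TRIBES) = n^2-n+1$, where $\TRIBES$ is an $\fand$ of $n$ $\for$s, each over $n$ variables. The plan has two halves: an explicit tree for the upper bound and an adversary argument for the lower bound.

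\textbf{Upper bound.} We use the natural tree that processes the tribes one at a time. Within tribe $i$ it queries $x[i,1],x[i,2],\dots$ in order and stops at the first answer $1$, then moves on to tribe $i+1$. If all $n$ variables of some tribe answer $0$, it outputs $0$; if every tribe produces a $1$, it outputs $1$. On any path, each tribe that is passed contributes at most $n-1$ zero answers, since a tribe with $n$ zeros ends the computation. The only exception is the one tribe that may be fully zero and end the run, which contributes $n$. The crude count is therefore $n(n-1)+1$ only if we account carefully. A path that fails at tribe $i$ has at most $(i-1)(n-1)+n \le n(n-1)+1$ zeros. A path that outputs $1$ has at most $n(n-1)$ zeros. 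So the $0$-depth is at most $n^2-n+1$.

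\textbf{Lower bound.} Fix any decision tree computing $\TRIBES$. The adversary answers $0$ to every query unless that query is to the last unqueried variable of its tribe, in which case it answers $1$. We then argue that the path must collect $n^2-n+1$ zeros.

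Under this strategy, no tribe ever becomes all-zero. Each tribe with at least one unqueried variable still contains no $1$ among its queried entries. Hence the function is not determined to be $0$ at any point.

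The function is also not determined to be $1$ while some tribe remains completely unresolved, that is, while all of its queried entries are $0$ and some entry is unqueried. Completing that tribe's free variables with zeros and everything else with ones gives value $0$. Completing all free variables with ones gives value $1$ (every tribe then contains a $1$).

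So the tree can only reach a leaf once every tribe has been fully queried. By then it has received exactly $n-1$ zeros per tribe, for a total of $n(n-1)$ zeros.

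\textbf{Extra zero.} This count is one short of $n^2-n+1$, so the last step needs a refinement: for the last tribe to be closed, the adversary should answer $0$ even on its final variable. Concretely, the adversary answers $1$ on the last unqueried variable of a tribe only if at least one other tribe still has an unqueried variable. Otherwise it answers $0$.

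With this modification, the path has $n-1$ zeros in each of the first $n-1$ tribes to be completed and $n$ zeros in the final one, giving $n^2-n+1$ zeros in total.

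The main obstacle is checking that the function stays undetermined under the modified adversary. Consider any moment before the final tribe's last variable is queried. The final tribe still has only zeros among its queried entries and at least one free variable, so setting its free variables to $0$ gives value $0$. Setting all free variables to $1$ gives value $1$, because every other tribe either already contains a $1$ or still has a free variable. So no leaf can be reached early, and any correct tree has $0$-depth at least $n^2-n+1$.
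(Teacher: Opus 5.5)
Your proof is correct and follows the paper's approach: your refined adversary (answer $1$ only on the query that completes a tribe while some other variable is still unqueried, and $0$ otherwise) is exactly the paper's strategy. Your undeterminedness check (the all-ones completion gives $1$, the zero-completion of a still-open tribe gives $0$) is the same contradiction the paper uses, and it forces all $n^2$ queries with $n^2-n+1$ of them answered $0$. You also give the explicit sequential tree for the upper bound, which the paper's proof of the lemma omits; in a final write-up, present only the refined adversary and drop the first version that falls one zero short.
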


\begin{proof}
Let $T$ be a decision tree that computes $\TRIBES$. We describe an adversarial strategy, i.e., the values returned by the responder so that the there is at least one path from the root to the leaf such that there are $n^2-n + 1$ 0's queried on the path. The adversarial strategy is as follows: Initialize the empty sets: \(X_1, X_2, ...X_n\) which correspond to queried variables from the sets that the $\for$ is performed over i.e \(X_{i} = \{x[i, j] \mid x[i,j] \quad \text{is queried}, \quad j \in [n] \}\). If any $x[i, j]$ is queried by the $T$, add it to $X_i$ and if \(|X_i| = n \text{ and }\sum_i|X_i| < n^2 \) then return 1 else return 0.

We claim that with the above strategy forces the $T$ to query for all the variables before outputting a value. This would directly imply $n^2- n + 1$ 0's queried since that many queries (all but the last queried variables in each block + the last queried variable). Suppose for the sake of contradiction, some variables have not been queried. We can observe that the strategy has assigned variables such that if all remaining variables are assigned (i)1 then that assignment would lead to $\TRIBES(x) = 1$ and (ii)0 would lead to $\TRIBES(x) = 0$. Thus this leads to a contradiction.
\end{proof}

A matching upper bound through the following strategy leads to $\dqcs{\wedge}(\TRIBES) = n^2- n+1$: 
\begin{enumerate}
    \item Query $n-1$ variables in each block $X_i$.
    \item Consider all the blocks $X_{i_1}, ..., X_{i_t}$ where all the variables have been set to 0. Now query the remaining variables in these as an $\bigwedge^t_{s = 1}x[i_s, r_{i_s}]$ where $r_i$ was the single unqueried variable in $X_i$. Return the output.
\end{enumerate} 

\paragraph{$\fand$ query complexity of the cheat sheet}

Consider the cheat sheet function \(f := \TRIBESCS\) mapping \( \left( \{0,1\}^{n^2}\right)^c \times \left(\{0,1\}^{c\cdot m}\right)^{2^c} \rightarrow \{0,1\}\) where
\(c := 10\log(n^2), m := n\log(n^2)\). Note that $c$ is set to \(10 \log \dqc(\TRIBES)\) while \(m := C(\TRIBES) \log \dqc(\TRIBES)\). Following the proof\cite{Scott2015} that shows that $g_{CS}$ has deterministic query complexity $\Omega(\dqc(f))$ where $f$ was the original function we show the below:

\begin{theorem}
    \(\dqcs{0}(\TRIBESCS) = \Omega(n^2)\)
\end{theorem}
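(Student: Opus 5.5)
We will prove the lower bound with an adversary argument that mirrors the Aaronson--Ben-David--Kothari cheat sheet lower bound of \cite{Scott2015}. In our version the adversary is charged only for queries it answers $0$. Recall that $\TRIBESCS$ consists of $c$ copies $y^{(1)},\dots,y^{(c)}$ of $\TRIBES$ inputs, together with $2^c$ cheat-sheet cells of $c\cdot m$ bits each. The output is $1$ exactly when the cell indexed by $\ell = (\TRIBES(y^{(1)}),\dots,\TRIBES(y^{(c)}))$ contains valid certificates for all $c$ copies that are consistent with the $y$'s.

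Fix any decision tree $T$ for $\TRIBESCS$. The adversary answers every cheat-sheet query with $0$. Since $0^{m}$ is not a valid certificate for $\TRIBES$ (a certificate must point to actual variables with their values, and we can fix the encoding so that the all-zero string is invalid), a cell that has been read to zero in enough places can be treated as invalid. Each copy $y^{(i)}$ is answered using the adversary strategy from the preceding Lemma. That strategy keeps $\TRIBES(y^{(i)})$ undetermined until all $n^2$ variables of the copy are queried, and along the way it answers $0$ on at least $n^2-n$ of them.

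The counting step is the core of the argument. Suppose $T$ reaches a leaf after at most $\epsilon n^2$ zero-answers, for a small constant $\epsilon$. Because the adversary answers $0$ to every cheat-sheet query, this bounds the total number of cheat-sheet queries by $\epsilon n^2$. On the $y$ side, the Lemma's strategy on each copy answers $1$ at most $n$ times, so only the queries answered $1$ are free. Hence the number of $y$-queries is at most $\epsilon n^2 + cn$. It follows that at least $c/2$ copies remain unresolved, meaning each of them has fewer than $n^2$ queries. The unresolved copies leave at least $2^{c/2}$ candidate indices $\ell$. Among these, only at most $\epsilon n^2 = o(2^{c/2}) = o(n^{10})$ cells have been touched at all. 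So some candidate cell $\ell^*$ has no queried bits.

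We now complete the input in two ways. First, fill the free variables of the unresolved copies so that they evaluate to $\ell^*$; this is possible because the Lemma's strategy keeps both values reachable. Then write valid certificates into cell $\ell^*$; the $\TRIBES$ certificates have size at most $n$, which fits inside $m$. This gives output $1$. Second, use the same $y$-completion but leave cell $\ell^*$ all-zero, which gives output $0$. Both inputs are consistent with the leaf, a contradiction. Therefore $\dqcs{0}(\TRIBESCS) \ge \epsilon n^2 = \Omega(n^2)$.

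The main obstacle is the interaction between $1$-answers and $0$-answers. Our charging argument needs the adversary's $1$-answers to be few; otherwise $T$ could learn many bits cheaply. The Lemma's strategy guarantees at most $n$ ones per copy, giving the bound $cn = o(n^2)$, but I must verify this carefully across all $c$ copies simultaneously. A related technical point is the encoding convention that makes an all-zero cell invalid. If that cannot be arranged, I would instead let the adversary answer cheat-sheet bits so as to make the queried portion of each touched cell inconsistent. Doing so may cost some $1$-answers, but at most one per touched cell, so the total stays controlled.
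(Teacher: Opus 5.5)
Your proof is correct and is essentially the paper's argument. You use the same adversary (answer $0$ on cheat-sheet bits, run the Lemma's $\TRIBES$ adversary on each copy) and reach the same contradiction via an unqueried cell $\ell^*$ that the unresolved copies can be completed to point at; only the bookkeeping is reversed, since you bound total queries by the zeros plus at most $cn$ ones, whereas the paper first forces $n^2$ queries and then notes at most $n$ of them are answered $1$. Your worry about making the all-zero cell invalid is moot: $\ell^*$ is entirely unqueried, so in the second completion it can be filled with any invalid content.
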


\begin{proof}
    Consider the following strategy: As long as the bit is outside the input copies \(x_1, x_2, \cdots x_c\) return 0. For bits in an $x_i$ consider the adversarial strategy we chose for the individual \(\TRIBES\) function such that only computing all bits can lead to $\TRIBES(x_i)$. Suppose for the sake a contradiction there is a deterministic decision tree $T$ that outputs the answer without querying $n^2$ bits on the above strategy. Note that this implies (i) There is a cheatsheet cell \(\in \{0,1\}^{c\cdot m}\) that has not be queried at all (even for a single bit). Since even querying 1 bit of every cheatsheet cell implies $2^c \ge n^2$ queries, (ii)There is no $x_i$ such that it is fully queried.

    Note that the above means that the remaining bits of each $x_i$ can be set that $f(x_1)f(x_2) \cdots f(x_c)$ points towards the completely unqueried cheatsheet cell. Hence the output can be set to either 0 or 1 by changing this cheatsheet cell.

    We proved that $T$ queries at least $n^2$ queries when it comes across an input designed by the strategy and note that there can only be $n$ 1's that would be outputted (when $n$ full blocks have been queried across different copies). Hence we can conclude that $\dqcs{0}(\TRIBESCS) = \Omega(n^2 -n)$ and therefore $\dqcs{\wedge}(\TRIBESCS) = \Omega(n^2)$
    
\end{proof}

We also know that $\dqcs{\wedge}(\TRIBESCS) \le \dqc(\TRIBESCS) = \tilde{O}(n^2)$. \cite{Goos2024} show that restriction on $\tilde{O}(n^2)$ always leads to $\dqc(f \mid_{\rho}) \le \tilde{O}(n^{3/2})$. This directly implies that on restricting on (possibly) lesser variables would lead to $\dqc(f \mid_{\rho}) \le \tilde{O}(n^{3/2})$ and hence $\dqcs{\wedge}(f \mid_{\rho}) \le \tilde{O}(n^{3/2})$ resulting in incondensability of $\dqcs{\wedge}$

\begin{corollary}
    For the function $f': = \bigvee^n_{i = 1}\bigwedge^n_{j = i} x[i, j]$ the $\for$ query complexity will be incondensable
\end{corollary}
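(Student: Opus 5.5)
The plan is to reduce the $\for$ case to the $\fand$ case already proved, using De Morgan duality.

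\textbf{Duality of the query models.} For any $f$ let $f^*(x) := \neg f(\neg x)$, where $\neg x$ flips every bit. Given an $\fand$-decision tree $T$ for $f$, build a tree $T'$ by replacing each query $\bigwedge_{i\in S_v} x_i$ with $\bigvee_{i\in S_v} x_i$, swapping the two children of each node, and negating each leaf label. Since $\bigvee_{i\in S_v} x_i = \neg \bigwedge_{i\in S_v} (\neg x)_i$, the run of $T'$ on $x$ follows exactly the run of $T$ on $\neg x$, and it outputs $\neg f(\neg x) = f^*(x)$. The same conversion works in the other direction, so $\dqcs{\vee}(f^*) = \dqcs{\wedge}(f)$.

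\textbf{Compatibility with restrictions.} Dualization commutes with restrictions: $(f|_\rho)^* = f^*|_{\bar\rho}$, where $\bar\rho$ flips every fixed value of $\rho$ and leaves the same set of coordinates free. The map $\rho \mapsto \bar\rho$ is a bijection on restrictions with a given number of free variables. Hence, for every restriction of $f^*$ to $k$ variables, the resulting $\dqcs{\vee}$ equals $\dqcs{\wedge}$ of the corresponding restriction of $f$ to $k$ variables.

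\textbf{Applying this to the corollary's function.} $\TRIBES^*$ is $\bigvee_i \bigwedge_j x[i,j]$, which is exactly $f'$. The incondensability, however, was established for the cheat-sheet version $\TRIBESCS$. So I will interpret the corollary as a statement about $(\TRIBESCS)^*$, which is the cheat-sheet construction built over $f'$ with all inputs and outputs complemented. The remaining steps are:
\begin{itemize}
\item By the previous theorem and duality, $\dqcs{\vee}((\TRIBESCS)^*) = \dqcs{\wedge}(\TRIBESCS) = \Omega(n^2)$.
\item By duality again, every restriction of $(\TRIBESCS)^*$ to $\tilde O(n^2)$ variables has $\dqcs{\vee}$ equal to $\dqcs{\wedge}$ of the dual restriction of $\TRIBESCS$. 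That is at most $\dqc$ of that restriction, which is $\tilde O(n^{3/2})$ by \cite{Goos2024}.
\item Setting $k = \dqcs{\vee} = \tilde\Theta(n^2)$ gives the bound $\tilde O(k^{3/4})$.
\end{itemize}

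\textbf{Main obstacle.} The delicate point is matching this dual object to the corollary's statement. The corollary literally names $f'$, but $f'$ on its own has only $n^2$ variables and condenses trivially. The proof must therefore make explicit that "the function" means the cheat-sheet version built over $f'$. One should also check that this dual cheat-sheet agrees, up to complementing inputs and output, with the standard cheat-sheet construction over $f'$. It does: complementing the $c$ copy outputs permutes the cells, and the certificate cells can be read in complemented form.

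If one prefers to avoid duality, a direct route is to mirror the earlier adversary argument. One would swap the roles of $0$ and $1$ answers to get $\dqcs{1}(f') = n^2-n+1$, and use the analogous relation $\dqcs{1}(f) \le \dqcs{\vee}(f) \le \dqcs{1}(f)\log(n+1)$. The duality argument above already gives exactly this.
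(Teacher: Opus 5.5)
Your proof is correct, but it takes a different route from the paper's. The paper settles the corollary with a one-line ``similar symmetric argument.'' It reruns the adversary with the roles of $0$ and $1$ swapped to lower-bound the $1$-depth of the cheat-sheet function over $f'$. It then uses the $\for$ analogue $\dqcs{1}(f)\le\dqcs{\vee}(f)$ of the Loff--Mukhopadhyay relation, and reuses $\dqcs{\vee}\le\dqc$ together with the restriction bound of \cite{Goos2024}.

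You instead prove once that De Morgan dualization satisfies $\dqcs{\vee}(f^*)=\dqcs{\wedge}(f)$ and commutes with restrictions via $\rho\mapsto\bar\rho$. The whole $\fand$ result then transfers verbatim to $(\TRIBESCS)^*$: both the $\Omega(n^2)$ lower bound and the $\tilde O(n^{3/2})$ bound for every restriction. Your route needs no new adversary argument and no $\for$ version of the Loff--Mukhopadhyay inequality, and it makes precise what ``symmetric'' means. It also avoids a step the paper leaves implicit: \cite{Goos2024} proves its restriction bound for $\TRIBESCS$, so applying it to a cheat sheet over $f'$ tacitly uses that $\dqc$ is invariant under complementing inputs and output. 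The paper's route, in exchange, works directly with a cheat sheet built on $f'$, which is what the corollary's wording suggests. You are right that the literal $f'$ on $n^2$ variables condenses trivially, so the cheat-sheet reading is the only sensible one.

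One statement in your identification step is off. $(\TRIBESCS)^*$ is not the cheat sheet over $f'$ ``with all inputs and outputs complemented.'' If it were, you would have $(\TRIBESCS)^*=((f')_{CS})^*$ and hence $\dqcs{\vee}((\TRIBESCS)^*)=\dqcs{\wedge}((f')_{CS})$, which is the wrong measure. What actually happens is this:
\begin{itemize}
\item the copies $x_i$ are not complemented, since the dual evaluates $f'(x_i)$ directly;
\item the output is negated, so the dual accepts iff the indicated cell is \emph{not} valid;
\item the cell index is complemented;
\item the cell contents are read in complemented encoding.
\end{itemize}
Negating the output and permuting variables preserve $\dqcs{\vee}$, but complementing a subset of input bits does not in general. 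So matching a fixed ``standard'' encoding needs either a convention or a remark that the adversary's answers on cell bits are immaterial. This is not load-bearing: $(\TRIBESCS)^*$ is itself a legitimate witness for the incondensability of $\dqcs{\vee}$.
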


The above follows through a similar symmetric argument on the above function. For this we use the following $\mathsf{OR}$ counterpart of the result from \cite{and_zero_decision_trees} : 
$\dqcs1(f) \le \dqcs{\vee}(f)$

\section{Discussion and Conclusion}

In this work, we have explored the condensability and incondensability properties of several Boolean function complexity measures. We establish the incondensability of the block sensitivity (and certificate complexity), thus resolving an open question posed in \cite{Goos2024}. We further demonstrated that this incondensability bound is tight for Rubinstein-like functions. Additionally, we have shown that $\mathsf{AND}$ and $\mathsf{OR}$ decision tree complexities are incondensable. We also presented a weak condensation result for Fourier sparsity.

\paragraph{Connection between Condensability and Separations:}
The connections between condensability and the separations between various Boolean function complexity measures is worth noting. 
In particular, 
    for any two Boolean function complexity measures $M_1$ and $M_2$, if $M_1$ can be condensed losslessly and if $M_2 \in \Theta(M_1)$ then $M_2$ can also be condensed losslessly.

The above observation gives us the necessity of separations with condensable measures for incondensability but not sufficiency. This can be observed in the incondensability shown in the previous sections. The \textsc{Tribes}$_n$ function exhibits a separation between query complexity $\dqc(f)$ with degree $\deg$ and sensitivity $\sen$. Similarly, the Rubinstein-like function shows a separation between block sensitivity $\bsen$ and sensitivity $\sen$. However, both degree and sensitivity can be condensed losslessly. We can also see that $\bsen, \cc$ condense exactly for Monotone functions since $\sen(f) = \bsen(f) = \cc(f)$\cite{Nisan1989} for a monotone function $f$. 
In particular, this gives, 
    If there exists a Boolean function $f$ such that there is no restriction $f_{\rho}$ on $O(\sen(f))$ where $\bsen(f_{\rho}) = \Omega(\sqrt{\bsen(f)})$, then $f$ has a super quadratic gap between $\bsen(f), \sen(f)$. 
This follows by contradiction, suppose for every function $f$, $\bsen(f)  = O(\sen(f)^2)$ then we have a restriction on the number of variables $\sen(f)$ such that $\sen(f_{\rho}) = \sen(f) \ge c \sqrt{\bsen(f)}$ and hence $\bsen(f_{\rho}) = \Omega(\sqrt{\bsen(f)})$.

\paragraph{Lower Bounds for Condensation}
The Sensitivity Theorem \cite{huang2019inducedsubgraphshypercubesproof} implies that for any Boolean function, there exists a restriction to $\sen(f)$ variables (which is $O(\bsen(f))$) such that the block sensitivity of the restricted function is at least $\Omega(\bsen(f)^{1/4})$. A stronger bound, due to Kayal {\em et al.} \cite{KMPS05}, shows that there exists a restriction to $\bsen(f)$ variables for which the block sensitivity is at least $\Omega(\bsen(f)^{1/2})$. An interesting question is whether such a fraction can be achieved by some Boolean function, or whether the fraction can be improved.

\bibliographystyle{alpha}
\bibliography{references}

@misc{KMPS05,
author = {Chandrima Kayal and Rajat Mittal and Manaswi Paraashar and Nitin Saurabh},
note = "private communication",
year = 2025
}

@article{hart2025condensing,
  title={Condensing Hardness in Boolean Functions},
  author={Hart, Gabriel},
  year={2025},
}

@InProceedings{and_zero_decision_trees,
  author =	{Loff, Bruno and Mukhopadhyay, Sagnik},
  title =	{{Lifting Theorems for Equality}},
  booktitle =	{36th International Symposium on Theoretical Aspects of Computer Science (STACS 2019)},
  pages =	{50:1--50:19},
  series =	{Leibniz International Proceedings in Informatics (LIPIcs)},
  ISBN =	{978-3-95977-100-9},
  ISSN =	{1868-8969},
  year =	{2019},
  volume =	{126},
  editor =	{Niedermeier, Rolf and Paul, Christophe},
  publisher =	{Schloss Dagstuhl -- Leibniz-Zentrum f{\"u}r Informatik},
  address =	{Dagstuhl, Germany},
  URL =		{https://drops.dagstuhl.de/entities/document/10.4230/LIPIcs.STACS.2019.50},
  URN =		{urn:nbn:de:0030-drops-102892},
  doi =		{10.4230/LIPIcs.STACS.2019.50}
}

@misc{huang2019inducedsubgraphshypercubesproof,
      title={Induced subgraphs of hypercubes and a proof of the Sensitivity Conjecture}, 
      author={Hao Huang},
      year={2019},
      eprint={1907.00847},
      archivePrefix={arXiv},
      primaryClass={math.CO},
      url={https://arxiv.org/abs/1907.00847}, 
}

@article{Goos2024,
  title={Hardness Condensation by Restriction},
  author={Mika G{\"o}{\"o}s and Ilan Newman and Artur Riazanov and Dmitry Sokolov},
  journal={Proceedings of the 56th Annual ACM Symposium on Theory of Computing},
  year={2024},
  url={https://api.semanticscholar.org/CorpusID:266903666}
}

@article{Scott2015,
  title={Separations in query complexity using cheat sheets},
  author={Scott Aaronson and Shalev Ben-David and Robin Kothari},
  journal={Proceedings of the forty-eighth annual ACM symposium on Theory of Computing},
  year={2015},
  url={https://api.semanticscholar.org/CorpusID:4865642}
}

@article{Rubinstein1995,
  title = {Sensitivity vs. block sensitivity of boolean functions},
  author = {Rubinstein, D.},
  journal = {Combinatorica},
  volume = {15},
  number = {2},
  year={1995},
}

@inproceedings{Nisan1989,
author = {Nisan, N.},
title = {CREW PRAMS and decision trees},
year = {1989},
isbn = {0897913078},
publisher = {Association for Computing Machinery},
address = {New York, NY, USA},
url = {https://doi.org/10.1145/73007.73038},
doi = {10.1145/73007.73038},
booktitle = {Proceedings of the Twenty-First Annual ACM Symposium on Theory of Computing},
pages = {327–335},
numpages = {9},
location = {Seattle, Washington, USA},
series = {STOC '89}
}

@INPROCEEDINGS{BureshSanthanam06,
  author={Buresh-Oppenheim, J. and Santhanam, R.},
  booktitle={21st Annual IEEE Conference on Computational Complexity (CCC'06)}, 
  title={Making hard problems harder}, 
  year={2006},
  volume={},
  number={},
  pages={15 pp.-87},
  doi={10.1109/CCC.2006.26}}

@article{ProofComplexityRazborov2016,
author = {Razborov, Alexander},
title = {A New Kind of Tradeoffs in Propositional Proof Complexity},
year = {2016},
issue_date = {May 2016},
publisher = {Association for Computing Machinery},
address = {New York, NY, USA},
volume = {63},
number = {2},
issn = {0004-5411},
url = {https://doi.org/10.1145/2858790},
doi = {10.1145/2858790},
journal = {J. ACM},
month = apr,
articleno = {16},
numpages = {14}
}

\appendix

\section{Direct Proof of Incondensibility of Block Sensitivity}\label{app:bs_incon}

We analyze the sensitivity and block sensitivity of $f$ and $g$ defined in Def \ref{def:Mod_Rub}:

\begin{description}
    \item[$g(x) = 1$] Note that for such an input all bits are sensitive and hence we have $\sen(g, x) = \bsen(g, x) = k^2$
    \item[$g(x) = 0$] We observe that any such input can have at most one neighbor with $g$ value 1. Hence $\sen(g, x) \le 1$. However, at most $k$ inputs have $g$ value 1 and they can all be reached by flipping $k$ sized blocks in $\Vec{0}$. Hence, $\bsen(g, x) \le k $ and for the all 0 inputs equality happens.
\end{description}

\begin{description}
    \item[$f(x) = 1$] 
    \begin{itemize}
        \item Note that an input such that more than 1 sub function has $g(x_i) = 1$ then $\sen(f, x) = 0$. If there is exactly one $i$ such that $g(x_i) = 1$ then we know that only flipping bits in this block changes the value to give $\sen(f, x) = k^2$
        \item We know that if a subfunction on $x_i$ gives $g(x_i) = 1$ then any sensitive block $b_i$ should have at least one bit in this block, i.e $X_i \cap B_j \neq \phi$ and hence there can be at most $k^2$ disjoint blocks i.e $\bsen(f, x) \le k^2$ The equality happens for the same input described in the previous input.
    \end{itemize}
    \item[$f(x) = 0$]
    \begin{itemize}
        \item Each subfunction has at most one neighbor in the dimensions of $x_i$ such that $g(x_i) = 1$. Hence $\sen(f,x) \le k^2$ 
        \item Suppose that there is a sensitive block system $\mathcal{B} = \bigcup_i B_i$ for $x$. Let us assign an $X_{i'}$ for every block based on the subfunction that it flips to a 1. Consider the disjoint blocks $B_{i1}, .., B_{in_i}$ that were assigned $X_i$ - since we know that these when restricted to $X_i$ form sensitive blocks for $g(x_i)$ and $\bsen_0(g) \le k$ which implies $n_i \le k$.

        By summing this up over all the blocks assigned over all the $X_i$'s we have

        $$\bsen(f, x) \le \sum_{i \in [k^2]} \bsen_0(g_i) = k^3$$
        
        We note that this happens for the all 0 input and hence $\bsen_0(f) = k^3$
    \end{itemize}
\end{description}

\begin{lemma}
    Let $\rho$ be a restriction on $g$ ($k^2$ variables) such that $g_{\rho}$ is not a constant function, then the block sensitivity $\bsen_0(g_{\rho}) \le \max (1, \frac{r}{k})$ where $r$ is the number of free variables
    \label{obs:bs_partial}
\end{lemma}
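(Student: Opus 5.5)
We prove the statement by mirroring the case analysis of Observation~\ref{obs:cc_partial}, but for disjoint sensitive blocks. Fix a restriction $\rho$ with $g_\rho$ non-constant and an input $x$ with $g_\rho(x)=0$. Every sensitive block $B$ of $x$ takes $x$ to an input $y=x\oplus e_B$ with $g_\rho(y)=1$. By the definition of $g$, such a $y$ has exactly one of the $k$ contiguous groups (say group $j$) all ones, and every free variable outside that group is $0$. So each sensitive block can be labelled by the group $j$ it completes. We will bound the number of disjoint blocks in two ways.

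First, suppose two disjoint sensitive blocks $B, B'$ are labelled by the same group $j$. Then $x\oplus e_B$ and $x\oplus e_{B'}$ agree on all free coordinates: both equal the unique completion with group $j$ all ones and everything else zero. Hence $B=B'$, since both equal the set of free coordinates where $x$ differs from that completion. So each label is used at most once, and the number of disjoint sensitive blocks is at most the number of \emph{feasible} groups $j$, i.e., groups in which $\rho$ assigns no $0$.

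Second, we use the case split from Observation~\ref{obs:cc_partial}.

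If $\rho$ assigns some $1$, say in group $j_0$, then $\rho$ must assign no $1$ in any other group, or else $g_\rho\equiv 0$ (contradicting non-constancy). Moreover only group $j_0$ is feasible, since any other group would need that $1$ turned off. So $\bsen(g_\rho,x)\le 1$.

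If $\rho$ assigns no $1$, then every feasible group is entirely free. There are then at most $r/k$ such groups, giving $\bsen(g_\rho,x)\le r/k$.

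Taking the maximum over $x$ and over the two cases yields $\bsen_0(g_\rho)\le\max(1,r/k)$.

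The main subtlety is the injectivity step: that distinct disjoint blocks cannot target the same group. This requires noting that the target $1$-input restricted to the free variables is uniquely determined by $j$; this is exactly where the ``all other bits $0$'' clause in the modified definition of $g$ is used. A secondary point to check is the feasibility claim in the first case, namely that a group containing a $\rho$-assigned $0$ can never become all ones.
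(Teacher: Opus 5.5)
Your proof is correct and takes the same route as the paper. You split on whether $\rho$ fixes any $1$, and you bound the number of disjoint sensitive blocks at a $0$-input by the number of $1$-inputs of $g_\rho$: at most $r/k$ fully free groups, or the single completion of the group containing the fixed $1$. You also make explicit the injectivity step that the paper leaves implicit (distinct blocks lead to distinct $1$-inputs, each determined by its group), and your notion of a feasible group cleanly handles the paper's slightly imprecise phrasing of the second case.
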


\begin{proof}
    Since $g_{\rho}$ is not trivially constant we can conclude that $\rho$ is of the following kind:

    \begin{enumerate}
        \item There are no 1's assigned by $\rho$. Note that in this case for any $g_{\rho}(x) = 0$, the only possible inputs that give $g_{\rho}(x) = 1$ is for the $b \le \frac{r}{k}$ number of inputs which have contiguous k sized blocks. Note that this is achieved for $\Vec{0}$ with $\bsen(f_{\rho}, x) \le \frac{r}{k}$
        \item Suppose there is one particular block $x_i$ which has some 1's and the remaining bits in it unassigned, note that the output can only be changed to a 1 when all remaining the bits in this $x_i$ are made to 1. Hence there can be only 1 sensitive block since any other disjoint block can not be sensitive. $\bsen(f_{\rho}, x) \le 1$
    \end{enumerate}
\end{proof}

\begin{theorem}
    Consider any restriction $\rho$ on $f$ ($f$ has block sensitivity $k^3$) such that there are $O(k^3)$ free variables, then block sensitivity can be at most $O(k^2)$.
    \label{thm:bs_incon}
\end{theorem}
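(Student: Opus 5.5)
We argue directly, mirroring the proof of Theorem~\ref{thm:cc_incon} but with block sensitivity in place of certificate complexity, and split according to the value of $f_\rho$ at an input $x$. Write $\rho_i$ for the part of $\rho$ acting on block $X_i$, $g_{\rho,i}$ for the restricted copy of $g$ on $X_i$, and $r_i$ for the number of free variables in $X_i$, so that $\sum_i r_i = O(k^3)$. Copies on which $g_{\rho,i}$ is constant contribute nothing to any sensitive block. If some copy is constantly $1$, then $f_\rho$ is constant and there is nothing to prove.

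\textbf{Case $f_\rho(x)=1$.} Every sensitive block must flip $f_\rho$ to $0$. It therefore has to destroy every copy currently evaluating to $1$, so in particular it must intersect some fixed block $X_i$ with $g_{\rho,i}(x_i)=1$. Disjoint blocks each use at least one of the at most $k^2$ free variables of $X_i$, so $\bsen(f_\rho,x)\le k^2$. This is the same counting as in the unrestricted analysis above.

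\textbf{Case $f_\rho(x)=0$.} Take a maximum family of disjoint sensitive blocks. Each block $B$ turns $f_\rho$ to $1$, so we may assign it an index $i$ with $g_{\rho,i}((x\oplus e_B)_i)=1$. The blocks assigned to index $i$, restricted to $X_i$, are disjoint sensitive blocks for $g_{\rho,i}$ at $x_i$; the restriction of each to $X_i$ is nonempty since $g_{\rho,i}(x_i)=0$. Hence at most $\bsen_0(g_{\rho,i})$ blocks are assigned to index $i$. Applying Lemma~\ref{obs:bs_partial} to each non-constant copy gives
\[
\bsen(f_\rho,x)\le \sum_{i\in[k^2]} \bsen_0(g_{\rho,i}) \le \sum_i \Bigl(1+\frac{r_i}{k}\Bigr)\le k^2+\frac{O(k^3)}{k}=O(k^2).
\]
Combining the two cases yields $\bsen(f_\rho)=O(k^2)$, while $\bsen(f)=k^3$ was shown above.

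\textbf{Main obstacle.} The delicate step is the decomposition in the $0$-case, specifically the claim that blocks assigned to $X_i$ remain sensitive for $g_{\rho,i}$ when restricted to $X_i$. This holds because $g_{\rho,i}$ depends only on the coordinates in $X_i$, so the parts of a block lying outside $X_i$ are irrelevant to whether copy $i$ flips. One also needs $\bsen_0(g_{\rho,i})\le\max(1,r_i/k)$ even when $\rho$ assigns some $0$s. In that case the number of $k$-blocks of $X_i$ that contain no fixed $0$ is still at most $r_i/k$, and each sensitive block must make one such $k$-block entirely $1$. Distinct disjoint sensitive blocks must target distinct such $k$-blocks, since each must set the whole target to $1$. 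I would verify this point carefully, since the lemma's stated case analysis is terse there.
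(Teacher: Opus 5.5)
Your proof is correct and follows the paper's argument essentially step for step. In the $1$-case you bound the disjoint sensitive blocks by the at most $k^2$ free variables of a copy evaluating to $1$; in the $0$-case you charge each disjoint sensitive block to the copy it flips, bound the count per copy by Lemma~\ref{obs:bs_partial}, and sum to $k^2 + O(k^3)/k$. Your added check of the lemma is correct and a bit more explicit than the paper's terse case analysis: disjoint sensitive blocks at a $0$-input of $g_{\rho,i}$ reach distinct $1$-inputs, of which there are at most $r_i/k$ when $\rho$ fixes no $1$s and at most one otherwise.
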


\begin{proof}
    For any $f_{\rho}(x) = 1$ as we have examined before if a subfunction on $x_i$ gives $g_{\rho}(x_i) = 1$ then any sensitive block $b_i$ should have at least one bit in this block, i.e $X_i \cap B_j \neq \phi$ and hence there can be at most $k^2$ disjoint blocks i.e $\bsen(f, x) \le k^2$

    Now let us suppose $f_{\rho}(x) = 0$. If the restriction directly gives a constant function then $\bsen(f_{\rho}) = 0$ so we need to deal with the situation where we have a non constant restriction. Suppose that $r_i$ are the number of free variables in $X_i$, note that a sensitive block has to change at least one of the subfunctions to a 1 and this implies a disjoint sensitive block system $\mathcal{B} = \bigcup_i B_i$. Let us assign an $X_{i'}$ for every block based on the subfunction that it flips to a 1. Consider the disjoint blocks $B_{i1}, .., B_{in_i}$ that were assigned $X_i$ - since we know that these when restricted to $X_i$ for sensitive blocks from obs \ref{obs:bs_partial} we can conclude that $n_i \le \frac{r_i}{k} + 1$.

    By summing this up over all the blocks assigned over all the $X_i$'s we have


    \begin{align}
        \bsen(f_{\rho})  & \le \sum_i \bsen(g_{\rho, i})\\ & < \sum_i \frac{r_i}{k} + \sum_i 1\\
        &\le \frac{O(k^3)}{k} + k^2 = O(k^2)
    \end{align}
\end{proof}

\section{Optimality of Incondensability Results (for Rubinstein like functions)}
\label{subsec:Optimal_gaps}
In this section we show that the incondensability result shown in \Cref{thm:bs_incon} is optimal for functions similar to Rubinstein's function. We observe that Rubinstein's function have 3 features which can be parameterized as follows:
\begin{itemize}
    \item Block Size of $g$ : $b$
    \item Number of blocks of $g$ : $n$
    \item Number of repetitions of $g$ in $f$ : $r$
    \item We define optimality to be the separation between $\bsen(f)$ and the maximum possible $\bsen(f_{\rho})$ for any restriction $\rho$ with $O(\bsen(f))$ free variables. More precisely the optimality is quantified as follows: $\log_{\bsen(f_{\rho})}(\bsen(f))$
\end{itemize}

For the function $f$ defined in \Cref{thm:bs_incon} we have $b = k, n = k$ and $r = k^2$. We now show that for any function with these parameters the incondensability result is optimal i.e., $\log_{\bsen(f_{\rho})}(\bsen(f)) = \log_{k^2}(k^3) = \frac{3}{2}$.

\vspace{0.3cm}
\noindent
We define the function $g$ on $b n$ variables as follows:
\begin{equation}
    g(x) = \begin{cases}
1 & \textrm{ if } \exists j : x_{jb + 1} = ... =x_{jb + b} = 1 \textrm{and } x_{i} = 0 \quad \text{for every other index}\\
0 & \textrm{ otherwise }
\end{cases}
\end{equation}

We analyze the sensitivity and block sensitivity as follows:

\begin{description}
    \item[$g(x) = 1$] Note that for such an input all bits are sensitive and hence we have $\sen(f, x) = \bsen(g, x) = b n$
    \item[$g(x) = 0$] We observe that any such input can have at most one neighbor with $g$ value 1. Hence $\sen(g, x) \le 1$. However at most $k$ inputs have $g$ value 1 and they can all be reached by flipping $b$ sized blocks in $\Vec{0}$. Hence $\bsen(g, x) \le n $ and for $\Vec{0}$ this is equal.
\end{description}

Consider the extension on $r b  n$ blocks of the following kind: let $X_1, X_2, ..., X_{r}$ be contiguous variables of size $b  n$ and $x_1,.., x_{r}$ are the input blocks on each of this.

We define $f(x) = \bigvee_{i \in [r]}g(x_i)$, in other words the OR of g value of all of the blocks. We examine the sensitivity and block sensitivity:

\begin{description}
    \item[$f(x) = 1$] 
    \begin{itemize}
        \item Note that an input such that more than 1 sub function has $g(x_i) = 1$ then $\sen(f, x) = 0$. If there is exactly one $i$ such that $g(x_i) = 1$ then we know that only flipping bits in this block changes the value to give $\sen(f, x) = b  n$
        \item We know that if a subfunction on $x_i$ gives $g(x_i) = 1$ then any sensitive block $b_i$ should have at least one bit in this block, i.e $X_i \cap B_j \neq \phi$ and hence there can be at most $b  n$ disjoint blocks i.e $\bsen(f, x) \le b  n$ The equality happens for the same input described in the previous analysis.
    \end{itemize}
    \item[$f(x) = 0$]
    \begin{itemize}
        \item Each subfunction has at most one neighbor in the dimensions of $x_i$ such that $g(x_i) = 1$. Hence $\sen(f,x) \le r$ 
        \item Suppose that there is a sensitive block system $\mathcal{B} = \bigcup_i B_i$ for $x$. Let us assign an $X_{i'}$ for every block based on the subfunction that it flips to a 1. Consider the disjoint blocks $B_{i1}, .., B_{i_{n_i}}$ that were assigned $X_i$ - since we know that these when restricted to $X_i$ form sensitive blocks for $g(x_i)$ and $\bsen_0(g) \le n$ which implies $n_i \le n$.

        By summing this up over all the blocks assigned over all the $X_i$'s we have

        $$\bsen(f, x) \le \sum_{i \in [r]} \bsen_0(f) = r  n$$

        We note that this happens for $(\vec{0})$ and hence $\bsen_0(f) = r  n$
    \end{itemize}
\end{description}

\begin{observation}
    Let $\rho$ be a restriction on $g$ ($b  n$ variables) such that $g_{\rho}$ is not a constant function the block sensitivity $\bsen_0(g_{\rho}) \le \max (1, \frac{m}{b})$ where $m$ is the number of free variables
    \label{obs:bs_gen_partial}
\end{observation}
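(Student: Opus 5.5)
The plan mirrors Lemma~\ref{obs:bs_partial}, with block length $b$ in place of $k$ and $n$ contiguous blocks $B_0,\dots,B_{n-1}$ of size $b$. Fix a restriction $\rho$ on the $bn$ variables of $g$ such that $g_\rho$ is non-constant, and let $m$ be the number of free variables. Take any $x$ with $g_\rho(x)=0$ and any family of disjoint sensitive blocks for $x$. Each such block $S$ satisfies $g_\rho(x\oplus e_S)=1$. By the definition of $g$, this means $x\oplus e_S$ is exactly the indicator of some $B_j$. Hence each sensitive block $S$ determines a unique index $j(S)$, and $S=\{i : x_i\neq [i\in B_{j(S)}]\}$. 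I will split into cases according to the $1$'s that $\rho$ fixes.

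\textbf{Case 1: $\rho$ fixes no variable to $1$.} A $1$-input of $g_\rho$ is the indicator of some $B_j$ that is consistent with $\rho$. This forces every variable of $B_j$ to be free. Since the $B_j$ are disjoint, there are at most $m/b$ such $j$. Two distinct disjoint sensitive blocks must target distinct $j$, since the target determines $S$ given $x$. Hence the number of disjoint sensitive blocks is at most $m/b$.

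\textbf{Case 2: $\rho$ fixes some variable to $1$.} For $g_\rho$ to be non-constant, all fixed $1$'s must lie in a single $B_{j_0}$, and every $1$-input of $g_\rho$ is then the indicator of $B_{j_0}$. So there is only one possible target, which forces a unique sensitive block $S$ for $x$. Thus $\bsen(g_\rho,x)\le 1$.

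Combining the two cases gives $\bsen_0(g_\rho)\le\max(1,m/b)$. The only delicate point is the injectivity of $S\mapsto j(S)$. It holds because $x$ is fixed and the target string is fully determined by $j$. Note that disjointness is not even needed for this step: injectivity alone bounds the number of sensitive blocks by the number of reachable $1$-inputs. I expect no real obstacle beyond checking that the case split is exhaustive under the non-constancy hypothesis.
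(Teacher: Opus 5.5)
Your proof is correct and takes the same approach as the paper. You use the same case split: when $\rho$ fixes no $1$'s, every $1$-input is the indicator of a fully free block, and there are at most $m/b$ of these; when the fixed $1$'s lie in a single block, there is a unique $1$-input, so at most one sensitive block. Your injectivity argument for $S \mapsto j(S)$ and your check that non-constancy forces all fixed $1$'s into one block just make rigorous what the paper states informally.
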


\begin{proof}
    Since $g_{\rho}$ is not trivially constant we can conclude that $\rho$ is of the following kind:

    \begin{enumerate}
        \item There are no 1's assigned by $\rho$. Note that in this case for any $g_{\rho}(x) = 0$, the only possible inputs that give $g_{\rho}(x) = 1$ is for the $k \le \frac{m}{b}$ number of inputs which have contiguous $b$ sized blocks. Note that this is achieved for $\Vec{0}$ with $\bsen(g_{\rho}, x) \le \frac{m}{b}$
        \item Suppose there is one particular block $x_i$ which has some 1's and the remaining bits in it unassigned, note that the output can only be changed to a 1 when all remaining the bits in this $x_i$ are made to 1. Hence there can be only 1 sensitive block since any other disjoint block cannot be sensitive. $\bsen(g_{\rho}, x) \le 1$
    \end{enumerate}
\end{proof}

\begin{theorem}
    Consider any restriction $\rho$ on $f$ ($f$ has block sensitivity $m = \max{(b n, r n)}$) such that there are $O(m)$ free variables, then block sensitivity can be at most $O(\max(r + \frac{m}{b}, b n))$.
    \label{thm:bs_gen_incon}
\end{theorem}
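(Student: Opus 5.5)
The plan mirrors the proof of \Cref{thm:bs_incon}, with the parameters $b,n,r$ kept symbolic. Fix a restriction $\rho$ of $f$ leaving $O(m)$ free variables, and write $m_i$ for the number of free variables in the copy $X_i$, so that $\sum_i m_i = O(m)$. We may assume $f_\rho$ is non-constant; otherwise $\bsen(f_\rho)=0$. I would bound $\bsen(f_\rho,x)$ separately for inputs $x$ with $f_\rho(x)=1$ and with $f_\rho(x)=0$.

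\textbf{Case $f_\rho(x)=1$.} Some copy $i$ has $g_\rho(x_i)=1$. Every sensitive block must flip $f_\rho$ to $0$, so it must destroy the $1$ of copy $i$ and hence must intersect $X_i$. Since the blocks are disjoint and $|X_i| = bn$, there are at most $bn$ of them. This gives $\bsen_1(f_\rho)\le bn$, exactly as in the unrestricted analysis.

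\textbf{Case $f_\rho(x)=0$.} Take a maximal disjoint family of sensitive blocks. Each block $B$ turns $f_\rho$ to $1$, so some copy $X_i$ has $g_\rho((x\oplus e_B)_i)=1$; assign $B$ to one such $i$. The blocks assigned to $X_i$, restricted to the free coordinates of $X_i$, are disjoint sensitive blocks for $g_{\rho,i}$ at the input $x_i$. Here $g_{\rho,i}$ denotes $g$ restricted by $\rho$ on copy $i$; it is non-constant, since it evaluates to $0$ at $x_i$ and to $1$ after the flip. By \Cref{obs:bs_gen_partial}, the number of blocks assigned to $X_i$ is at most $\max(1, m_i/b)\le 1 + m_i/b$. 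Summing over the $r$ copies gives
\[
\bsen_0(f_\rho)\le \sum_{i\in[r]}\Bigl(1+\frac{m_i}{b}\Bigr) \le r + \frac{O(m)}{b}.
\]

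Combining the two cases yields $\bsen(f_\rho)\le \max\bigl(bn,\; r+O(m)/b\bigr) = O\bigl(\max(r+\tfrac{m}{b}, bn)\bigr)$, which is the claimed bound.

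\textbf{Main obstacle.} The step needing care is the per-copy accounting in the $0$-case. A single sensitive block may touch several copies, so the assignment must send each block to exactly one copy it turns on. I also need the restriction of that block to $X_i$ to be a sensitive block for $g_{\rho,i}$ at $x_i$ itself, not at some modified input. This holds because copies are disjoint: bits of the block outside $X_i$ do not affect $g_{\rho,i}$. A second point is that \Cref{obs:bs_gen_partial} is stated for $\bsen_0$ of a non-constant restriction, so it applies only to copies that actually receive blocks. Copies that are constant under $\rho$ receive no blocks and contribute $0$, which is consistent with the sum.
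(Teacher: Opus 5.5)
Your proposal is correct and follows the paper's proof essentially step for step. On the $1$-side every sensitive block must hit the copy evaluating to $1$, giving $bn$; on the $0$-side each block is assigned to a copy it switches on and \Cref{obs:bs_gen_partial} is applied per copy, giving $\sum_i (1 + m_i/b) \le r + O(m)/b$, and your remarks on restricting blocks to $X_i$ and on constant copies spell out points the paper leaves implicit.
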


\begin{proof}
    For any $f_{\rho}(x) = 1$ as we have examined before if a subfunction on $x_i$ gives $g_{\rho}(x_i) = 1$ then any sensitive block $b_i$ should have at least one bit in this block, i.e $X_i \cap B_j \neq \phi$ and hence there can be at most $b  n$ disjoint blocks i.e $\bsen(f, x) \le b  n$

    Now let us suppose $f_{\rho}(x) = 0$. If the restriction directly gives a constant function then $\bsen(f_{\rho}) = 0$ so we need to deal with the situation where we have a non constant restriction. Suppose that $r_i$ are the number of free variables in $X_i$, note that a sensitive block has to change at least one of the subfunctions to a 1 and this implies a disjoint sensitive block system $\mathcal{B} = \bigcup_i B_i$. Let us assign an $X_{i'}$ for every block based on the subfunction that it flips to a 1. Consider the disjoint blocks $B_{i1}, .., B_{i_{n_i}}$ that were assigned $X_i$ - since we know that these when restricted to $X_i$ for sensitive blocks from obs \ref{obs:bs_gen_partial} we can conclude that $n_i \le \frac{r_i}{b} + 1$.

    By summing this up over all the blocks assigned over all the $X_i$'s we have

    $$\bsen(f, x) \le \sum \frac{r_i}{b} + r = \frac{\sum r_i}{b} + r = \frac{O(m)}{b} + r = O(\frac{m}{b} + r)$$

    Therefore, combining both the cases we have $\bsen(f_{\rho}) \le O(\max(r + \frac{m}{b}, b n))$
\end{proof}

For our analysis let $b = n^{\alpha}$ and $r = n^{\beta}$ where $\alpha, \beta \ge 0$. Therefore, our quantification of optimality becomes: $\log_{\bsen(f_{\rho})}(\bsen(f)) = \frac{\max{(\alpha+1, \beta+1)}}{\max{(1, \beta, \alpha+1, \beta + 1 - \alpha)}}$

With some case analysis we can see that the maximum value of this expression is $\frac{3}{2}$ which happens when $\alpha = 1$ and $\beta = 2$ which is exactly the case in \Cref{thm:bs_incon}.

\end{document}